\numberwithin{equation}{section}
\theoremstyle{plain}
\newtheorem{theorem}{Theorem}
\newtheorem{lemma}{Lemma}
\theoremstyle{definition}
\newtheorem*{assh*}{(H) Singular distribution hypothesis}
\newcommand{\prob}[1]{\DP\left\{#1\right\}}
\newcommand{\esm}[1]{\mathbb{E}\left[\,#1\,\right]}
\newcommand{\Bone}{\mathbf{1}}
\newcommand{\BC}{\mathbf{C}}
\newcommand{\BG}{\mathbf{G}}
\newcommand{\BH}{\mathbf{H}}
\newcommand{\BP}{\mathbf{P}}
\newcommand{\BU}{\mathbf{U}}
\newcommand{\BV}{\mathbf{V}}
\newcommand{\CJ}{\mathcal{J}}
\newcommand{\DP}{\mathbb{P}}
\newcommand{\DR}{\mathbb{R}}
\newcommand{\DZ}{\mathbb{Z}}
\newcommand{\BDelta}{\mathbf{\Delta}}
\newcommand{\BPsi}{\mathbf{\Psi}}
\newcommand{\Bx}{\mathbf{x}}
\newcommand{\By}{\mathbf{y}}
\newcommand{\FB}{\mathfrak{B}}
\DeclareMathOperator{\dist}{dist}
\newcommand{\ee}{\mathrm{e}}
\newcommand{\condH}{\mathbf{(H)}}
\begin{document}
\title[Wegner bounds for the continuous Bernoulli-Anderson models]{Wegner bounds for one-dimensional multi-particle Bernoulli-Anderson models in the continuum}

\author[T.~Ekanga]{Tr\'esor EKANGA$^{\ast}$}

\address{$^{\ast}$
Institut de Math\'ematiques de Jussieu,
Universit\'e Paris Diderot,
Batiment Sophie Germain,
13 rue Albert Einstein,
75013 Paris,
France}
\email{tresor.ekanga@imj-prg.fr}
\subjclass[2010]{Primary 47B80, 47A75. Secondary 35P10}
\keywords{N-body systems, Wegner bounds, random operators, Anderson localization}
\date{\today}

\begin{abstract}
We prove the Wegner bounds for the one-dimensional interacting multi-particle Anderson models in the continuum. The results apply to singular probability distribution functions such as the Bernoulli's measures. The proofs need the amplitude of the inter-particle interaction potential to be sufficiently weak. As a consequence, the results imply the Anderson localization via the multi-scale analysis.
\end{abstract}

\maketitle

\section{Introduction, assumption and the main results}

 \subsection{Introduction}

In our earlier work \cite{E16}, we proved the Wegner type bounds for the one-dimensional multi-particle interacting Bernoulli-Anderson tight-binding model on the lattice. We aim in the present paper to prove the continuous version of the work.

Note that there is an important number of works on the Wegner estimates see for example the papers \cites{CKM87,CL90,DSS02,CS08,KH13,K08,St01,BCSS10,BK05,W81} for the most famous of them. For different hypotheses, the Wegner bounds were obtain for probability distributions functions with a bounded density in \cites{K08,KH13,CL90} and for H\"older continuous probability distribution functions in \cites{CS08,St01}. We are more interested here with very singular distributions and we recall that the Wegner bounds were established for single-particle models for Bernoulli distributions in one dimension in \cite{CKM87} for the lattice case and in \cite{DSS02} for the continuum case.

In the above papers the main ideas of the proofs used the Furstenberg Theorem for the transfer random matrices and the methods do not apply in higher dimensions. This is why our main strategy in this paper is completely different and investigates new ideas. We show that the single-particle Wegner bounds remain stable when passing to multi-particle models provided that the interaction amplitude is sufficiently weak. The general strategy of our method uses a perturbation argument. 

The main difficulty in the continuous case is that the spectrum of a self-adjoint and compact operator in the continuous space is not necessary finite and in that case, we have to handle an infinite sum. This is done with the help of the Weyl's law. In the first step of the proof, we extend to the non-interacting multi-particle system the known Wegner bound for the single-particle model in one dimension proved by Damanik et al. \cite{DSS02}. The latter bounds are very strong, i.e., the probability decay is exponential and this is the fact that  enables  the  required bound for the multi-particle system. The last step is the obtention of the weakly interacting Wegner bounds from their non-interacting analog using the resolvent identities. 

\subsection{The Anderson model in the continuum}
We fix at the very beginning the number of particles $N\geq 2$. We are concern with multi-particle random Schr\"odinger operators of the following forms:
\[
\BH^{(N)}(\omega):=-\BDelta + \BU+\BV, 
\]
acting in $L^{2}((\DR^{d})^N)$. Sometimes, we will use the identification $(\DR^{d})^N\cong \DR^{Nd}$. Above, $\BDelta$ is the Laplacian on $\DR^{Nd}$, $\BU$ represents the inter-particle interaction which acts as multiplication operator in $L^{2}(\DR^{Nd})$. Additional information on $\BU$ is given in the assumptions. $\BV$ is the multi-particle random external potential also acting as multiplication operator on $L^{2}(\DR^{Nd})$. For $\Bx=(x_1,\ldots,x_N)\in(\DR^{d})^N$, $\BV(\Bx)=V(x_1)+\cdots+ V(x_N)$ and for $x\in\DR^d$,
\[
V(x,\omega)=\sum_{n\in\DZ}q_n(\omega)f(x-n).
\]
The single site potential $f\in L^1(\DR)$ is real-valued supported in $[-1/2;1/2]$ and not $0$ in the $L^1$ sense. The coupling constants $q_n$ are i.i.d. random variables relative to a complete probability space $(\Omega,\FB,\DP)$.                                    

Observe that the non-interacting Hamiltonian $\BH^{(N)}_0(\omega)$ can be written as a tensor product:
\[
\BH^{(N)}_0(\omega):=-\BDelta +\BV=\sum_{k=1}^N \Bone^{\otimes(k-1)}_{L^{2}(\DR^d)}\otimes H^{(1)}(\omega)\otimes \Bone^{\otimes(N-k)}_{L^2(\DR^d)},
\]
where, $H^{(1)}(\omega)=-\Delta + V(x,\omega)$ acting on $L^2(\DR^d)$. We will also consider random Hamiltonian $\BH^{(n)}(\omega)$, $n=1,\ldots,N$ defined similarly. Denote by $|\cdot|$ the max-norm in $\DR^{nd}$. 

 For any $\Bx\in\DZ^{nd}$ and $L>0$, we denote by $\BC^{(n)}_L(\Bx)$, the $n$-particle cube in $L^2(\DR^{nd})$ i.e., $\BC^{(n)}_L(\Bx):=\left\{\By\in\DR^{nd}: |\Bx-\By|\leq L\right\}$. We also denote by $\sigma(\BH^{(n)}_h(\omega))$ the spectrum of $\BH^{(n)}_h(\omega)$ and by $\BG^{(n)}_{\BC^{(n)}_L(\Bx),h}(E)$ the resolvent operator of $\BH^{(n)}_{\BC^{(n)}_L(\Bx),h}(\omega)$ for $E\notin\sigma(\BH^{(n)}_{\BC^{(n)}_L(\Bx),h}(\omega))$.

\subsection{The assumption} 

We denote by $\mu$, the common probability distribution measure of the i.i.d. random variables $\{q_n(\omega); \omega\in\Omega\}$.

\begin{assh*}
We assume that the support of the measure $\mu$ is bounded and not concentrated in a single point and $\int |x|^{\eta}d\mu(x)<\infty$ for some $\eta>0$.
\end{assh*} 

\subsection{The results}
\begin{theorem}[One volume Wegner bounds]\label{thm:FE.Wegner}
Let $d=1$ and $\BC^{(n)}_L(\Bx)$ be an $n$-particle cube in $\DR^{nd}$. Assume that hypothesis $\condH$ holds true. Let $I\subset\DR$ be a compact interval. For any $0<\beta<1$, there exist $L_0=L_0(I,\beta)$ and $h^*=h^*(\|\BU\|,L_0)$ such that for all $h\in(-h^*,h^*)$,
\[
\prob{\dist(E,\sigma(\BH^{(n)}_{\BC^{(n)}_L(\Bx),h}(\omega)))\leq \ee^{-L^{\beta}}}\leq L^{-q},
\]
for all $E\in I$, $L\geq L_0$ and any $q>0$.   
\end{theorem}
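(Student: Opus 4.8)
The plan follows the two stages announced in the introduction: a non-interacting estimate obtained by gluing together one-dimensional single-particle Wegner bounds, and a perturbative transfer to the weakly interacting operator via resolvent identities. As a preliminary remark, since $d=1$ the $n$-particle cube is a genuine product $\BC^{(n)}_L(\Bx)=\prod_{j=1}^{n}C_L(x_j)$ of intervals $C_L(x)=[x-L,x+L]$, and, with Dirichlet boundary conditions, the non-interacting operator splits as $\BH^{(n)}_{\BC^{(n)}_L(\Bx),0}=\sum_{j=1}^{n}\Bone\otimes\cdots\otimes H^{(1)}_{C_L(x_j)}(\omega)\otimes\cdots\otimes\Bone$, whence $\sigma(\BH^{(n)}_{\BC^{(n)}_L(\Bx),0})=\{\lambda^{(1)}+\cdots+\lambda^{(n)}:\lambda^{(j)}\in\sigma(H^{(1)}_{C_L(x_j)})\}$. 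Because $f\in L^1(\DR)$ is supported in $[-1/2,1/2]$, the potential $V$ is uniformly locally integrable, hence infinitesimally form-bounded with respect to $-\Delta$, so each $H^{(1)}_{C_L(x_j)}$ is bounded below by a constant $c_0$ independent of $L$ and of $\omega\in\supp\mu$; Weyl's law then furnishes a deterministic bound $C_1L$ on the number of its eigenvalues below any fixed level.

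\emph{Step 1: the non-interacting bound.} I would take as input the one-dimensional single-particle Wegner estimate of Damanik--Sims--Stolz \cite{DSS02}: for a compact $I'\subset\DR$ and any $\beta\in(0,1)$ there is $L_1$ such that, for $L\ge L_1$, all $E'\in I'$ and all $x$,
\[
\prob{\dist(E',\sigma(H^{(1)}_{C_L(x)}(\omega)))\le\ee^{-L^{\beta}}}\le\ee^{-L^{\beta'}},\qquad 0<\beta'<\beta,
\]
the stretched-exponential strength being essential. If $\dist(E,\sigma(\BH^{(n)}_{\BC^{(n)}_L(\Bx),0}))\le\ee^{-L^{\beta}}$ for some $E\in I$, then there are indices $i_1,\dots,i_n$ with each $\lambda^{(j)}_{i_j}\in\sigma(H^{(1)}_{C_L(x_j)})$ lying in a fixed compact interval $I'=I'(I,c_0,n)$ and $|E-\sum_j\lambda^{(j)}_{i_j}|\le\ee^{-L^{\beta}}$. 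Freezing the first $n-1$ indices (at most $(C_1L)^{n-1}$ choices) and conditioning on the single-site variables entering the boxes $C_L(x_1),\dots,C_L(x_{n-1})$, the remaining condition reads $\dist(E',\sigma(H^{(1)}_{C_L(x_n)}(\omega)))\le\ee^{-L^{\beta}}$ for a frozen energy $E'$, which is controlled by the estimate above. A union bound then gives
\[
\prob{\dist(E,\sigma(\BH^{(n)}_{\BC^{(n)}_L(\Bx),0}(\omega)))\le\ee^{-L^{\beta}}}\le(C_1L)^{n-1}\ee^{-L^{\beta'}}\le\tfrac12L^{-q}
\]
for $L\ge L_0(I,\beta,q,n)$, the polynomial counting factor being absorbed by the stretched-exponential decay.

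\emph{Step 2: the weak interaction.} Writing $\BH^{(n)}_{\BC^{(n)}_L(\Bx),h}=\BH^{(n)}_{\BC^{(n)}_L(\Bx),0}+h\BU$ and using the resolvent identity $\BG^{(n)}_{\BC^{(n)}_L(\Bx),h}(E)=(\BI+h\,\BG^{(n)}_{\BC^{(n)}_L(\Bx),0}(E)\,\BU)^{-1}\BG^{(n)}_{\BC^{(n)}_L(\Bx),0}(E)$, whose Neumann series converges as soon as $h\|\BU\|\,\|\BG^{(n)}_{\BC^{(n)}_L(\Bx),0}(E)\|<1$, one sees that on the complement of the event of Step 1 — where $\|\BG^{(n)}_{\BC^{(n)}_L(\Bx),0}(E)\|<\ee^{L^{\beta}}$ — choosing $h^*$ small enough in terms of $\|\BU\|$ and the scale fixed by $L_0$ yields $\dist(E,\sigma(\BH^{(n)}_{\BC^{(n)}_L(\Bx),h}))\ge\tfrac12\dist(E,\sigma(\BH^{(n)}_{\BC^{(n)}_L(\Bx),0}))$. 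It therefore suffices to run Step 1 at a slightly coarser exponent $\beta_0<\beta$ to conclude at the scale $\ee^{-L^{\beta}}$ for $\BH^{(n)}_{\BC^{(n)}_L(\Bx),h}$.

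\emph{Where the difficulty lies.} I expect two genuinely delicate points. The first, emphasised in the introduction, is the \emph{a priori} infinite spectrum of the single-particle operators: this is neutralised by the deterministic Weyl upper bound, so that only $O(L)$ eigenvalues per factor enter the union bound in Step 1. The second is the independence invoked in Step 1 — when the intervals $C_L(x_j)$ overlap, the single-site variables they see coincide, so freezing the first $n-1$ boxes also constrains the $n$-th. The natural remedy is to separate the configurations: when $\BC^{(n)}_L(\Bx)$ decomposes into clusters separated by more than the (finite) range of $\BU$ and the width of $\supp f$, the operator $\BH^{(n)}_{\BC^{(n)}_L(\Bx),h}$ itself splits as a sum over clusters carrying independent randomness and one induces on $n$; for the non-decomposable cubes one must run the single-particle argument keeping the shared variables fixed and check that enough free variables remain. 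Finally, arranging that $h^*$ can be taken uniform for all $L\ge L_0$ in the perturbation step — rather than shrinking with the resolution scale — is the remaining point requiring care, and is presumably where the finite-range structure of $\BU$ is exploited.
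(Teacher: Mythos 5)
Your overall route coincides with the paper's: take the Damanik--Sims--Stolz one-dimensional bound as input, use Weyl's law to keep the relevant eigenvalue count per single-particle factor at $O(L)$, pass to the non-interacting operator $\BH^{(n)}_{\BC^{(n)}_L(\Bx),0}$ by a conditioning/union-bound argument over sums $\lambda^{(1)}+\cdots+\lambda^{(n)}$, and then transfer to $\BH^{(n)}_{\BC^{(n)}_L(\Bx),h}$ by a Neumann-series perturbation with $h^*$ determined by $\|\BU\|$ and the fixed initial scale $L_0$ (the paper phrases the last step through the fixed- and variable-energy statements of Theorems \ref{thm:np.Weak.Wegner} and \ref{thm:np.var.Wegner}, but the substance is the same). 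The genuine gap is the point you flag in Step 1 but do not close: the independence used when you freeze the couplings seen by $C^{(1)}_L(x_1),\dots,C^{(1)}_L(x_{n-1})$ and then apply the single-particle Wegner bound to $C^{(1)}_L(x_n)$. In the fully diagonal configuration $x_1=\cdots=x_n$ the $n$ intervals are identical, so after conditioning there is \emph{no} free randomness left in the last factor and the conditional single-particle bound is simply unavailable; your proposed remedy (decomposition into well-separated clusters plus induction on $n$, then ``check that enough free variables remain'') cannot repair this, since fully or heavily overlapping cubes constitute a single cluster and it is exactly there that no free variables remain. The paper handles this case by a different, non-probabilistic-conditioning idea that is absent from your proposal: when all coordinates coincide, $\sigma(\BH^{(n)}_{\BC^{(n)}_L(\Bx),0})=n\,\sigma(H^{(1)}_{C^{(1)}_L(x_1)})$, hence
\[
\dist\bigl(E,\sigma(\BH^{(n)}_{\BC^{(n)}_L(\Bx),0})\bigr)=n\,\dist\bigl(\tfrac{E}{n},\sigma(H^{(1)}_{C^{(1)}_L(x_1)})\bigr),
\]
and the one-particle estimate is applied directly at the rescaled energy $E/n$, with no independence required.

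For the intermediate, partially overlapping configurations the paper's Case (b) carries out the construction you only gesture at: it builds disjoint sub-domains and the associated sigma-algebras $\FB_k$ so that each frozen sum $\lambda_{\neq i}$ is measurable with respect to randomness supported away from a region that still carries fresh couplings, and only then applies the single-particle bound conditionally. Your sketch defers precisely this construction, and it is the crux of the multi-particle extension rather than a routine verification. Note also a further point your plan would have to justify even after the geometry is arranged: the DSS02 estimate is stated for the full randomness in a box, whereas the conditional application requires it to hold uniformly with part of the couplings in the box frozen; this needs an argument (or a careful reduction to boxes containing only free couplings) and is not a formality. Until the overlapping-cube case, and in particular the diagonal case, is treated along these or equivalent lines, Step 1 of your proposal does not yield Theorem \ref{thm:FE.Wegner}.
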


\begin{theorem}[Two volumes Wegner bounds]\label{thm:VE.Wegner}
Let $d=1$ and consider two $n$-particle cubes $\BC^{(n)}_L(\Bx)$ and $\BC^{(n)}_L(\By)$ in $\DR^{nd}$. Assume that hypothesis $\condH$ holds true. Then for any $E_0\in \DR$ and any $0<\beta<1$, there exist $L_0=L_0(\beta)>0$, $\delta_0=\delta_0(\|\BU\|,L_0)>0$ such that if we put $I_0:=[E_0-\delta_0,E_0+\delta_0]$, then there exists $h^*=h^*(\|\BU\|,L_0)>0$ such that for all $h\in(-h^*,h^*)$:
\[
\prob{\exists E\in I_0; \max(\dist(E,\sigma(\BH^{(n)}_{\BC^{(n)}_L(\Bx),h}(\omega))),\dist(E,\sigma(\BH^{(n)}_{\BC^{(n)}_L(\By),h}(\omega))))\leq \ee^{-L^{\beta}}}\leq L^{-q},
\]
for all $L\geq L_0$ and any $q>0$.
\end{theorem}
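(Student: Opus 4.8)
The plan is to deduce Theorem~\ref{thm:VE.Wegner} from the one-volume estimate of Theorem~\ref{thm:FE.Wegner} by conditioning on the disorder inside one of the two cubes and controlling, through Weyl's law, how many eigenvalues of the associated Hamiltonian can fall into the small window $I_0$. I will assume, as one may in the applications to the multi-scale analysis, that the two cubes are separated in the usual sense, so that $\BH^{(n)}_{\BC^{(n)}_L(\Bx),h}(\omega)$ and $\BH^{(n)}_{\BC^{(n)}_L(\By),h}(\omega)$ depend on disjoint families of the random couplings $q_n$ and hence are stochastically independent. The first reduction is this: if the event inside the probability in Theorem~\ref{thm:VE.Wegner} occurs, pick $E\in I_0$ realizing it together with an eigenvalue $\lambda$ of $\BH^{(n)}_{\BC^{(n)}_L(\By),h}(\omega)$ with $|E-\lambda|\leq\ee^{-L^\beta}$; then for $L\geq L_0$ one has $\lambda\in\tilde I_0:=[E_0-2\delta_0,E_0+2\delta_0]$ and, by the triangle inequality, $\dist\big(\lambda,\sigma(\BH^{(n)}_{\BC^{(n)}_L(\Bx),h}(\omega))\big)\leq 2\ee^{-L^\beta}$. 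So the two-volume event is contained in the union, over the eigenvalues $\lambda\in\sigma(\BH^{(n)}_{\BC^{(n)}_L(\By),h}(\omega))\cap\tilde I_0$, of the events $\{\dist(\lambda,\sigma(\BH^{(n)}_{\BC^{(n)}_L(\Bx),h}(\omega)))\leq 2\ee^{-L^\beta}\}$.

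Next I would bound the number of terms in that union deterministically. Since $\BC^{(n)}_L(\By)$ is bounded, $\BH^{(n)}_{\BC^{(n)}_L(\By),h}(\omega)$ has compact resolvent and discrete spectrum; because $\mu$ has bounded support the couplings $q_n$ are almost surely uniformly bounded, $f\in L^1(\DR)$ supported in a unit interval is infinitesimally form-bounded relative to $-\Delta$ in one dimension (via $H^1(\DR)\hookrightarrow L^\infty(\DR)$), and $\BU$ is bounded; hence Dirichlet--Neumann bracketing together with the Weyl asymptotics on $\BC^{(n)}_L(\By)$ yields a deterministic bound $\CN_L:=\card\big(\sigma(\BH^{(n)}_{\BC^{(n)}_L(\By),h}(\omega))\cap\tilde I_0\big)\leq C(I_0,\|\BU\|)\,L^{nd}=C L^{n}$ (recall $d=1$), valid for almost every $\omega$ and all $L\geq L_0$. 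Conditioning on the $\sigma$-algebra generated by the couplings inside $\BC^{(n)}_L(\By)$, enumerating $\sigma(\BH^{(n)}_{\BC^{(n)}_L(\By),h}(\omega))\cap\tilde I_0=\{\lambda_1(\omega),\dots,\lambda_{\CN_L}(\omega)\}$, and using independence of the $\Bx$-cube Hamiltonian, a union bound gives
\[
\prob{\text{two-volume event}}\;\leq\;C\,L^{n}\cdot\sup_{E\in\tilde I_0}\prob{\dist\big(E,\sigma(\BH^{(n)}_{\BC^{(n)}_L(\Bx),h}(\omega))\big)\leq 2\ee^{-L^\beta}}.
\]

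Then I would invoke the one-volume bound. Fixing $\beta'\in(0,\beta)$ one has $2\ee^{-L^\beta}\leq\ee^{-L^{\beta'}}$ for $L$ large, so Theorem~\ref{thm:FE.Wegner} on $\tilde I_0$ with parameter $\beta'$ bounds the supremum above by $L^{-\tilde q}$ for every $\tilde q>0$, provided $L\geq L_0$ and $|h|<h^*=h^*(\|\BU\|,L_0)$; taking $\tilde q=q+n$ and absorbing the constant $C$ (allowed since $q$ is arbitrary) gives $\prob{\text{two-volume event}}\leq L^{-q}$ for all $L\geq L_0$. The parameters are fixed in the order $L_0$, then $\delta_0=\delta_0(\|\BU\|,L_0)$ small enough that $\tilde I_0$ stays in a controlled range and both Theorem~\ref{thm:FE.Wegner} and the Weyl bound apply with uniform constants, then $h^*$.

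The one genuinely delicate point is the deterministic, only polynomially growing estimate on $\CN_L$: one must ensure the Weyl asymptotics on the $n$-particle cube hold with a constant independent of the disorder and of size $\ord(L^{nd})$, which requires a uniform control of the form-bounds of the random potential $\BV$ and of the interaction $\BU$ — precisely where hypothesis $\condH$ (bounded support of $\mu$) and the one-dimensional $L^1$-smallness of the single-site potential $f$ are used. The other ingredients — the reduction to a finite union, the conditioning and independence argument, and the passage from $\ee^{-L^\beta}$ to $2\ee^{-L^\beta}$ — are routine once Theorem~\ref{thm:FE.Wegner} is in hand.
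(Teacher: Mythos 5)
Your argument proves a different (weaker) statement than the one asserted. You explicitly add the hypothesis that the two cubes are separated, so that $\BH^{(n)}_{\BC^{(n)}_L(\Bx),h}(\omega)$ and $\BH^{(n)}_{\BC^{(n)}_L(\By),h}(\omega)$ are stochastically independent; Theorem~\ref{thm:VE.Wegner} as stated imposes no such condition (the cubes may overlap or even coincide). Independence is not a cosmetic convenience in your scheme: the crucial step, where you condition on the disorder in $\BC^{(n)}_L(\By)$, freeze the eigenvalues $\lambda_j(\omega)$, and then dominate the conditional probability by $\sup_{E}\prob{\dist(E,\sigma(\BH^{(n)}_{\BC^{(n)}_L(\Bx),h}(\omega)))\leq 2\ee^{-L^{\beta}}}$, is exactly the step that fails when the two Hamiltonians share random couplings. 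So, as a proof of the theorem as written, there is a genuine gap. (Within the separated setting your chain of reductions — triangle inequality to pass to the eigenvalues of the $\By$-cube, a deterministic $\ord(L^{n})$ bound on their number in $\tilde I_0$ via uniform form-boundedness and bracketing, union bound, then the fixed-energy bound with $\beta'<\beta$ to absorb the factor $2$ — is sound and is essentially the standard Chulaevsky--Suhov two-volume argument.)

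The paper's own proof is far more economical and needs none of this: since $\max$ of the two distances being $\leq \ee^{-L^{\beta}}$ forces in particular $\dist(E,\sigma(\BH^{(n)}_{\BC^{(n)}_L(\Bx),h}(\omega)))\leq \ee^{-L^{\beta}}$, the two-volume event is contained in the one-volume \emph{variable-energy} event for a single cube, and the probability is then bounded by $L^{-q}$ by Theorem~\ref{thm:np.var.Wegner} (whose proof handles the energy interval $I_0$ deterministically via the first resolvent identity, with $\delta_0$ tied to $L_0$). In other words, the fixed-to-variable energy passage, which you carry out probabilistically through eigenvalue counting plus independence, is done in the paper at the level of resolvents and requires neither Weyl's law nor any separation of the cubes. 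Your construction does buy something the paper's reduction does not — a bound in which the two cubes enter genuinely independently, which is what one typically wants for the inductive use of two-volume estimates in the multi-scale analysis — but to match the statement here you should either add the separation hypothesis explicitly or replace the conditioning step by the trivial reduction to the single-cube variable-energy bound.
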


\begin{theorem}[Localization]\label{thm:localization}
Under the hypothesis $\condH$, there exists $h^*>0$ such that for all $h\in(-h^*,h^*)$:
\begin{enumerate}
\item[(i)] the spectrum of the operator $\BH^{(n)}(\omega)$ is pure point with exponentially decaying eigenfunctions at infinity,
\item[(ii)]
the multi-particle Hamiltonian $\BH^{(n)}_h(\omega)$ exhibits complete strong dynamical localization.
\end{enumerate}
\end{theorem}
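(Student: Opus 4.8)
I will treat Theorem~\ref{thm:localization} as a consequence of the multi-particle multi-scale analysis (MSA) in the continuum, in its variable-energy form, run with the number of particles $n$ and the length scale $L$ as the two induction parameters; the Wegner bounds established in Theorems~\ref{thm:FE.Wegner} and~\ref{thm:VE.Wegner} supply one of the two probabilistic inputs of that scheme. Fix a compact energy window $I\subset\DR$; since $\sigma(\BH^{(n)}_h(\omega))$ is contained in a half-line uniformly in $\omega$ and $\DR$ is covered by countably many such windows, it suffices to prove pure point spectrum with exponentially decaying eigenfunctions, and strong dynamical localization, inside $I$. The MSA requires: (i) a Wegner-type estimate on single cubes and on pairs of distant cubes at scale $L$, and (ii) an initial length scale estimate at some scale $L_0$. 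Input (i) is precisely Theorems~\ref{thm:FE.Wegner} and~\ref{thm:VE.Wegner}, valid for $L\ge L_0$ and $h\in(-h^*,h^*)$; moreover these bounds are of \emph{strong} MSA type, with a decay $L^{-q}$ for arbitrary $q$ (in fact the one-volume resonance has sub-exponentially small probability), so they dominate all combinatorial factors produced by the $n$-particle geometry and are compatible with the bootstrap needed for \emph{complete} dynamical localization.

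The heart of the proof is input (ii), the initial length scale estimate, and this is the step I expect to be the main obstacle. For the non-interacting operator $\BH^{(n)}_0(\omega)=\sum_{k=1}^N \Bone^{\otimes(k-1)}\otimes H^{(1)}(\omega)\otimes\Bone^{\otimes(N-k)}$ one uses that the one-dimensional single-particle continuum Bernoulli--Anderson operator $H^{(1)}(\omega)$ is completely localized throughout its spectrum (Damanik--Sims--Stolz \cite{DSS02}, building on the Carmona--Klein--Martinelli circle of ideas), which yields exponential off-diagonal decay of the single-particle finite-volume resolvents with high probability at a suitable scale $L_0$; tensoring this over the $N$ factors produces the analogous decay of $\BG^{(n)}_{\BC^{(n)}_{L_0}(\Bx),0}(E)$ between the center and the boundary of an $n$-particle cube, uniformly for $E\in I$. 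To pass from $\BH^{(n)}_0$ to $\BH^{(n)}_h$ for small $h$ I would run a geometric/Combes--Thomas resolvent expansion: since $\|h\BU\|\le h^*\|\BU\|$ can be made as small as needed, the finite-volume resolvent of $\BH^{(n)}_{\BC^{(n)}_{L_0}(\Bx),h}$ is obtained from that of $\BH^{(n)}_{\BC^{(n)}_{L_0}(\Bx),0}$ by a convergent Neumann series whose terms are controlled on the set where the unperturbed cube is good, which preserves the initial-scale decay at the cost of shrinking $h^*$ and enlarging $L_0$ — exactly the dependence $h^*=h^*(\|\BU\|,L_0)$ already recorded in Theorems~\ref{thm:FE.Wegner}--\ref{thm:VE.Wegner}. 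The delicate point is to make this perturbation estimate effective uniformly in the cube, and at precisely the scale at which the single-particle input becomes quantitatively available.

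With inputs (i) and (ii) secured, I would carry out the MSA induction. The geometric mechanism is the standard dichotomy for $n$-particle cubes: a pair of well-separated cubes is either \emph{partially interactive}, in which case the corresponding operators factorize into clusters of strictly fewer particles and one invokes the induction hypothesis on $n$ together with the single-particle localization, or \emph{fully interactive}, in which case the random potentials seen by the two cubes depend on disjoint families of the i.i.d. couplings $\{q_m\}$, hence are independent, so that the two-volume bound of Theorem~\ref{thm:VE.Wegner} applies. Combining this dichotomy with the geometric resolvent inequality gives the scale step $L_k\mapsto L_{k+1}=\lfloor L_k^{\alpha}\rfloor$, $1<\alpha<2$, and propagates, for every $E\in I$, the estimate that $\BC^{(n)}_{L}(\Bx)$ is $(E)$-nonsingular with probability at least $1-L^{-p}$ for $p$ as large as we wish. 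Finally, the passage from this MSA output to statement~(i) is the $n$-particle analogue of the von Dreifus--Klein argument, and the passage to statement~(ii) follows from the eigenfunction-correlation/bootstrap arguments of Germinet--Klein type, both adapted to multi-particle operators in the continuum as in Boutet de Monvel--Chulaevsky--Stollmann--Suhov \cite{BCSS10} and in line with the lattice treatment of \cite{E16}. Taking the countable union over energy windows yields the stated localization for all $h\in(-h^*,h^*)$.
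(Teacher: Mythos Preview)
Your proposal is correct and follows essentially the same approach as the paper: the paper's own proof of Theorem~\ref{thm:localization} is a brief sketch that declares the one- and two-volume Wegner bounds of Theorems~\ref{thm:FE.Wegner}--\ref{thm:VE.Wegner} to be the needed MSA input, invokes the geometric resolvent inequalities, and then defers all details of the multi-particle multi-scale analysis and the passage to spectral and dynamical localization to \cite{E11,E13,E16,E17,DK89}. Your write-up is in fact more detailed than the paper's proof, since you spell out the initial length scale estimate via the tensor structure of $\BH^{(n)}_0$ together with a Neumann-series perturbation in $h$, and you make explicit the partially/fully interactive dichotomy, neither of which the paper discusses beyond citation.
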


\section{The road map to the proofs}
We begin with the result on the Wegner estimates for single-particle models established in the paper by Damanik et al. \cite{DSS02}. The estimate is a very strong result, namely the decay bound  is exponential. While for the Anderson localization via the multi-scale analysis some polynomial decay of the probability of the local resolvent is sufficient  to prove  localization. The strategy used in this Section in order to prove our main results is completely different to the proof of \cite{CKM87}. Our idea uses some perturbation argument on the local resolvent identities for operators in Hilbert spaces. While the work \cite{CKM87}, is based on the study of the Lyapounov exponent and the theory of random transfer matrices for single-particle models in one dimension.

\subsection{The single-particle fixed energy Wegner bound}
The known result for single-particle models and for singular probability distributions including the Bernoulli's measures is the following:

\begin{theorem}\label{thm:1p.Wegner}
Let $I\subset \DR$ be a compact interval and $C^{(1)}_L(x)$ be a cube in $\DR^d$. Assume that hypothesis $\condH$ holds true, then for any $0<\beta<1$ and $\sigma>0$, there exist $L_0=L_0(I,\beta,\sigma)>0$ and $\alpha=\alpha(I,\beta,\sigma)>0$ such that,
\[
\prob{\dist(E,\sigma(\BH^{(1)}_{C^{(1)}_L(x)}(\omega)))\leq \ee^{-\sigma L^{\beta}}}\leq \ee^{-\alpha L^{\beta}}
\]
for all $E\in I$ and all $L\geq L_0$.
\end{theorem}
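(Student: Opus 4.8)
\emph{The plan of proof.} This is, up to cosmetic changes, the fixed-energy Wegner bound established for one-dimensional continuum Bernoulli--Anderson operators by Damanik, Sims and Stolz \cite{DSS02}; since it is the single-particle input on which the whole multi-particle analysis below rests, I outline how the argument goes. The plan is to translate the event $\{\dist(E,\sigma(\BH^{(1)}_{C^{(1)}_L(x)}(\omega)))\le\epsilon\}$ into an eigenvalue-counting statement for the one-dimensional transfer-matrix cocycle, and then to control that statement by the Furstenberg--Le~Page regularity theory for products of i.i.d.\ matrices in $\mathrm{SL}(2,\DR)$.

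First I would set up the one-dimensional reduction. Writing $[x-L,x+L]$ for the interval underlying $C^{(1)}_L(x)$ with, say, Dirichlet boundary conditions, the box operator has discrete spectrum, so by self-adjointness $\{\dist(E,\sigma(\BH^{(1)}_{C^{(1)}_L(x)}(\omega)))\le\epsilon\}$ is exactly the event that $\BH^{(1)}_{C^{(1)}_L(x)}(\omega)$ has an eigenvalue in $[E-\epsilon,E+\epsilon]$, i.e.\ that the counting function $N_L(E,\omega):=\card\{j:E_j(\omega)\le E\}$ satisfies $N_L(E+\epsilon,\omega)>N_L(E-\epsilon,\omega)$. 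By Sturm oscillation theory $N_L(E,\omega)$ equals the number of zeros in $(x-L,x+L)$ of the solution $u_{E,\omega}$ of $-u''+V(\cdot,\omega)u=Eu$ with $u(x-L)=0$, $u'(x-L)=1$, hence is (up to $\pi$-normalisation) the Pr\"ufer winding of $u_{E,\omega}$; propagating $u_{E,\omega}$ cell by cell expresses this winding through the product $M_L(E,\omega)\in\mathrm{SL}(2,\DR)$ of the $\sim L$ i.i.d.\ single-cell transfer matrices $T_n(E,q_n)$.

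Next I would invoke the probabilistic input. Hypothesis $\condH$ --- the single-site law $\mu$ is non-degenerate and compactly supported --- forces the group generated by $\{T_n(E,\cdot)\}$ to be non-compact and strongly irreducible for every $E$, with exponential moments; hence by Furstenberg's theorem the Lyapunov exponent $\gamma(E)=\lim_L\tfrac1L\esm{\log\|M_L(E,\omega)\|}$ is strictly positive, and by Le~Page's theorem $E\mapsto\gamma(E)$ is H\"older continuous, so $\gamma_I:=\min_{E\in I}\gamma(E)>0$. The same circle of ideas provides a uniform large-deviation estimate $\prob{\bigl|\tfrac1L\log\|M_L(E,\omega)\|-\gamma(E)\bigr|>\delta}\le\ee^{-c(\delta)L}$ for $E\in I$, together with its fixed-initial-direction counterpart, and --- through the Thouless formula --- H\"older continuity on $I$ of the integrated density of states $n(E)=\lim_L N_L(E,\omega)/(2L)$ plus quantitative control of the finite-volume fluctuations of $N_L(E,\omega)$. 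Feeding these into the reduction, one shows that off an exceptional event of probability $\le\ee^{-cL}$ the number of eigenvalues of $\BH^{(1)}_{C^{(1)}_L(x)}(\omega)$ in $[E-\epsilon,E+\epsilon]$ is bounded by $C_I\,L\,\epsilon^{\xi}$ for some H\"older exponent $\xi=\xi(I)>0$; with $\epsilon=\ee^{-\sigma L^{\beta}}$ and $\beta<1$ this quantity is $<1$, hence $0$, for $L$ large, so that $\dist(E,\sigma(\BH^{(1)}_{C^{(1)}_L(x)}(\omega)))>\ee^{-\sigma L^{\beta}}$ there. Consequently the probability in the statement is $\le\ee^{-cL}\le\ee^{-\alpha L^{\beta}}$ for any $\alpha\le c$ once $L\ge L_0(I,\beta,\sigma)$, all constants being uniform over $E\in I$ by the uniformities recorded above.

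The step I expect to be the main obstacle is the last one, and it is precisely what confines this result to the one-dimensional, all-energies, singular setting: since $\mu$ is permitted to be a Bernoulli (atomic) measure, no regularisation of the single-site distribution is available, so the counting-function bound cannot come from averaging over the coupling constants and must be read off entirely from the multiplicative structure of the transfer-matrix cocycle. Concretely, the difficulty is to upgrade the \emph{infinite-volume} H\"older regularity of $\gamma$ and $n$ --- which is what Le~Page's theorem delivers --- to the much stronger \emph{finite-volume} exponential-in-$L^{\beta}$ bound required here, with all estimates uniform in $E\in I$; carrying out this upgrade is the technical heart of \cite{DSS02}. That the single-particle bound comes out exponentially small (rather than merely polynomially small) is exactly what will leave enough room for the perturbative multi-particle arguments of the following sections.
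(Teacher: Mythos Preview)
Your proposal is correct and aligned with the paper's own treatment: the paper does not give an independent proof of this statement but simply refers the reader to Damanik, Sims and Stolz \cite{DSS02}. You go further by sketching the transfer-matrix/Furstenberg--Le~Page mechanism behind that result, which is accurate and appropriate as background, but for the purposes of this paper a bare citation suffices.
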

\begin{proof}
We refer the reader to the paper by Damanik et al. \cite{DSS02}.
\end{proof}

\subsection{The non-interacting multi-particle Wegner bound}
We need to introduce first 
\[
 \{ (\lambda^{(i)}_{j_i},\psi^{(i)}_{j_i}): j_i=1,\ldots,|C^{(1)}_L(x_i)|\},
\]
the eigenvalues and the corresponding eigenfunctions of $H^{(1)}_{C^{(1)}_L(x_i)}(\omega)$, $i=1,\ldots,n$. Then, the eigenvalues $E_{j_1\ldots j_n}$ of the non-interacting multi-particle random Hamiltonian $\BH^{(n)}_{\BC^{(n)}_L(\Bx)}(\omega)$ are written as sums:

\[
E_{j_1\ldots j_n}=\sum_{i=1}^{n} \lambda_{j_i}^{(i)}=\lambda_{j_1}^{(1)}+\cdots+\lambda^{((n)}_{j_n},
\]
while the corresponding eigenfunctions $\BPsi_{j_1,\ldots,j_n}$ can be chosen as tensor products
\[
\BPsi_{j_1,\ldots,j_n}=\phi^{(1)}_{j_1}\otimes\cdots\otimes\psi^{(n)}_{j_n}.
\]
We also denote by $\lambda_{\neq i}:=\sum_{\ell=1, \ell\neq i}^n \lambda^{(\ell)}_{j_{\ell}}$. The eigenfunctions of finite volume Hamiltonians are assumed normalized.

The result of this subsection is 
\begin{theorem}\label{thm:np.0.Wegner}
Let $I\subset\DR$ be a compact interval and consider an $n$-particle cube $\BC^{(n)}_L(\Bx)\subset\DR^{nd}$. Assume that hypothesis $\condH$ holds true and that the parameter $h=0$, then for any $0<\beta<1$ and $\sigma>0$, there exist $L_0=L_0(I,\beta,\sigma)>0$ and $\alpha=\alpha(I,\beta,\sigma)>0$ such that
\[
\prob{\dist(E, \sigma(\BH^{(n)}_{\BC^{(n)}_L(\Bx),0}(\omega)))\leq \ee^{-\sigma L^{\beta}}}\leq C(N,d)\cdot \ee^{-\alpha L^{\beta/p}},
\]
for all $E\in I$ and all $L\geq L_0$.
\end{theorem}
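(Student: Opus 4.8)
The plan is to reduce the $n$-particle non-interacting problem to the $1$-particle result of Theorem~\ref{thm:1p.Wegner} by exploiting the tensor-product structure of $\BH^{(n)}_{\BC^{(n)}_L(\Bx),0}(\omega)$. Since the box $\BC^{(n)}_L(\Bx)$ is a product of one-particle boxes $C^{(1)}_L(x_i)$ (or can be taken so, up to harmless modification), the spectrum of $\BH^{(n)}_{\BC^{(n)}_L(\Bx),0}(\omega)$ consists of sums $E_{j_1\ldots j_n}=\sum_{i=1}^n\lambda^{(i)}_{j_i}$, where for each $i$ the $\lambda^{(i)}_{j_i}$ are the eigenvalues of $H^{(1)}_{C^{(1)}_L(x_i)}(\omega)$, and crucially these $n$ single-particle operators depend on \emph{disjoint} families of coupling constants $\{q_m\}$ and are therefore independent. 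Thus the event $\{\dist(E,\sigma(\BH^{(n)}_{\BC^{(n)}_L(\Bx),0}(\omega)))\leq \ee^{-\sigma L^{\beta}}\}$ forces $\sum_i \lambda^{(i)}_{j_i}$ to lie within $\ee^{-\sigma L^{\beta}}$ of $E$ for some multi-index, hence forces at least one single-particle eigenvalue $\lambda^{(i)}_{j_i}$ to be within distance $\ee^{-\sigma L^{\beta}}$ of $E-\lambda_{\neq i}$ — a number that, conditionally on the other particles' randomness, is a fixed real.

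First I would make the reduction precise: condition on the randomness of particles $\ell\neq i_0$ for a well-chosen index $i_0$ (the index realizing the near-resonance), so that $E-\lambda_{\neq i_0}$ becomes a deterministic energy $E'$; then apply the single-particle fixed-energy bound of Theorem~\ref{thm:1p.Wegner} at energy $E'$ and box $C^{(1)}_L(x_{i_0})$ to get $\prob{\dist(E',\sigma(H^{(1)}_{C^{(1)}_L(x_{i_0})}(\omega)))\leq \ee^{-\sigma L^{\beta}}}\leq \ee^{-\alpha L^{\beta}}$, uniformly in $E'$ over a suitable compact interval. The point is that Theorem~\ref{thm:1p.Wegner} is a \emph{fixed-energy} statement valid for every $E'\in I'$, so it survives conditioning. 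A union bound over the choice of $i_0\in\{1,\dots,n\}$ then yields an overall bound of the form $n\cdot \ee^{-\alpha L^{\beta}}\leq C(N,d)\,\ee^{-\alpha L^{\beta}}$, which is stronger than the claimed $C(N,d)\,\ee^{-\alpha L^{\beta/p}}$.

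The main obstacle — and the reason the exponent degrades from $L^\beta$ to $L^{\beta/p}$ and an extra $C(N,d)$ appears — is the infinitude of the spectrum in the continuum: unlike the lattice case, $H^{(1)}_{C^{(1)}_L(x_i)}(\omega)$ has infinitely many eigenvalues, so one cannot naively union-bound over all multi-indices $(j_1,\dots,j_n)$. Here I would invoke Weyl's law: the number of eigenvalues of $H^{(1)}_{C^{(1)}_L(x_i)}(\omega)$ below a fixed energy ceiling (which suffices, since $E\in I$ is bounded and the potentials are bounded below, so only eigenvalues in a bounded window can contribute to a near-resonance with $E$) is polynomially bounded in $L$, say by $CL^{p}$ for a suitable $p=p(d)$. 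Consequently there are at most $(CL^p)^n$ relevant multi-indices. Taking the union bound over these and over $i_0$ gives $\prob{\cdots}\leq n\,(CL^p)^{n-1}\,\ee^{-\alpha L^\beta}$; absorbing the polynomial prefactor into the exponential at the cost of replacing $L^\beta$ by $L^{\beta/p}$ (or simply by $\tfrac12 L^\beta$ for $L$ large, which is even better, but the paper's formulation with $L^{\beta/p}$ is what is needed downstream) and collecting the combinatorial constants into $C(N,d)$ completes the proof.

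I would also need to check two technical points: that one may indeed take $\BC^{(n)}_L(\Bx)$ to be a Cartesian product of one-particle boxes with Dirichlet (or the chosen) boundary conditions so that the tensor decomposition of $\BH^{(n)}_{\BC^{(n)}_L(\Bx),0}$ holds exactly, and that the energy $E'=E-\lambda_{\neq i_0}$ stays in a fixed compact interval $I'$ depending only on $I$, $N$ and the (bounded) support of $\mu$ — this is where boundedness of $\supp\mu$ and of $f$ in hypothesis $\condH$ enters, guaranteeing a deterministic a priori lower bound on $\sigma(H^{(1)})$ and hence an upper bound on each $\lambda_{\neq i_0}$ coming from the bounded window.
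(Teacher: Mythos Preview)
Your reduction strategy---conditioning on the randomness of all particles except one and then applying the single-particle Wegner bound of Theorem~\ref{thm:1p.Wegner}---is the same as the paper's, and your use of Weyl's law to count relevant eigenvalues is a legitimate (and arguably cleaner) substitute for the paper's Lemma~\ref{lem:np.0.Wegner}, which instead sums the resolvent series. On that front the two approaches are equivalent.

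However, there is a genuine gap. You assert that ``these $n$ single-particle operators depend on \emph{disjoint} families of coupling constants $\{q_m\}$ and are therefore independent.'' This is false in general: the centre $\Bx=(x_1,\dots,x_n)$ is arbitrary, so the projection cubes $C^{(1)}_L(x_i)$ may overlap or even coincide. When they do, the operators $H^{(1)}_{C^{(1)}_L(x_i)}(\omega)$ share random variables, and conditioning on the randomness of particles $\ell\neq i_0$ may fix part (or all) of the randomness in $C^{(1)}_L(x_{i_0})$. In that situation the conditional law of $H^{(1)}_{C^{(1)}_L(x_{i_0})}$ is no longer the product measure required by the Damanik--Sims--Stolz bound, and Theorem~\ref{thm:1p.Wegner} cannot be invoked. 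In the extreme case $x_1=\dots=x_n$ your conditioning freezes everything and the argument collapses entirely.

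This overlap issue is precisely what drives the structure of the paper's proof: a separate treatment of the fully coincident case $x_1=\dots=x_n$ (Case~(a)), and in the remaining case (Case~(b)) an explicit construction of pairwise disjoint domains $\Lambda^{(k)}_L$, $D^{(k,\ell)}_L$ and an associated filtration of $\sigma$-algebras $\FB_k$, arranged so that for each $i$ the quantity $\lambda_{\neq i}$ is measurable with respect to some $\FB_{j_i}$ generated by randomness \emph{outside} a nontrivial piece of $C^{(1)}_L(x_i)$. Only after this decoupling can one condition and apply the single-particle bound. Your proposal needs an analogous mechanism; without it the conditioning step is not justified.
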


Before giving the proof, we have a Lemma.

\begin{lemma}\label{lem:np.0.Wegner}
Let $E\in\DR$ and $\BC^{(n)}_L(\Bx)\subset \DZ^{nd}$. Assume that for any $\sigma>0$, $0<\beta<1$ and $
i=1,\ldots,n$
\[
\dist(E-\lambda_{\neq i}, \sigma(H^{(1)}_{C^{(1)}_L(x_i)}(\omega)))>\ee^{-\sigma L^{\beta}},
\]
then, 
\[
\dist(E,\sigma(\BH^{(n)}_{\BC^{(n)}_L(\Bx)})) > \ee^{-2\sigma L^{\beta}}.
\]
\end{lemma}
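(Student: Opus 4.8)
The plan is to reduce the distance of $E$ to the spectrum of the non-interacting multi-particle operator to distances involving single-particle spectra, exploiting the tensor-product structure recalled just before the lemma. Since $\BH^{(n)}_{\BC^{(n)}_L(\Bx),0}(\omega)$ acts on the product cube $\BC^{(n)}_L(\Bx)=C^{(1)}_L(x_1)\times\cdots\times C^{(1)}_L(x_n)$ and equals $\sum_{i=1}^n \Bone^{\otimes(i-1)}\otimes H^{(1)}_{C^{(1)}_L(x_i)}(\omega)\otimes\Bone^{\otimes(n-i)}$, its spectrum is exactly the set of sums $E_{j_1\ldots j_n}=\sum_{i=1}^n \lambda^{(i)}_{j_i}$. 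So for any eigenvalue $E_{j_1\ldots j_n}$ of the multi-particle operator and any fixed index $i$, I would write
\[
E - E_{j_1\ldots j_n} = \bigl(E - \lambda_{\neq i}\bigr) - \lambda^{(i)}_{j_i},
\]
where $\lambda_{\neq i}=\sum_{\ell\neq i}\lambda^{(\ell)}_{j_\ell}$ as defined above. The point is that $\lambda^{(i)}_{j_i}\in\sigma(H^{(1)}_{C^{(1)}_L(x_i)}(\omega))$, so by the hypothesis of the lemma applied with this $i$,
\[
\bigl|E - E_{j_1\ldots j_n}\bigr| = \bigl| (E-\lambda_{\neq i}) - \lambda^{(i)}_{j_i}\bigr| \ge \dist\bigl(E-\lambda_{\neq i},\sigma(H^{(1)}_{C^{(1)}_L(x_i)}(\omega))\bigr) > \ee^{-\sigma L^{\beta}}.
\]

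Since this holds for every eigenvalue of $\BH^{(n)}_{\BC^{(n)}_L(\Bx)}$ (one may pick, say, $i=1$ uniformly, or any fixed $i$), taking the infimum over all the tuples $(j_1,\ldots,j_n)$ gives $\dist(E,\sigma(\BH^{(n)}_{\BC^{(n)}_L(\Bx)})) > \ee^{-\sigma L^{\beta}}$, which is in fact stronger than the claimed $\ee^{-2\sigma L^{\beta}}$; the weaker exponent in the statement leaves room and will surely suffice downstream. The only genuine subtlety — and the step I would be most careful about — is justifying that $\sigma(\BH^{(n)}_{\BC^{(n)}_L(\Bx),0})$ consists precisely of the sums $\lambda^{(1)}_{j_1}+\cdots+\lambda^{(n)}_{j_n}$, i.e. that the spectrum of a sum of commuting operators of the form $\Bone\otimes\cdots\otimes H^{(1)}\otimes\cdots\otimes\Bone$ on a tensor product of $L^2$ spaces over bounded cubes is the (closure of the) sumset of the factor spectra. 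This is standard because each $H^{(1)}_{C^{(1)}_L(x_i)}(\omega)$ on the bounded interval $C^{(1)}_L(x_i)$ has compact resolvent, hence purely discrete spectrum, so the tensor eigenbasis $\BPsi_{j_1\ldots j_n}$ is complete and no closure issue arises.

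A remark I would add: the lemma as stated quantifies over "any $\sigma>0$, $0<\beta<1$", but the proof is entirely deterministic and works for each fixed pair $(\sigma,\beta)$ separately — the hypothesis for a given $(\sigma,\beta)$ yields the conclusion for that same pair — so there is no real interplay between the quantifiers, and the argument above is complete once the spectral decomposition of the non-interacting operator is invoked.
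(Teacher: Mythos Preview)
Your argument is correct, and indeed it yields the sharper bound $\dist(E,\sigma(\BH^{(n)}_{\BC^{(n)}_L(\Bx)}))>\ee^{-\sigma L^{\beta}}$ rather than $\ee^{-2\sigma L^{\beta}}$. Your route is also genuinely different from --- and considerably simpler than --- the paper's own proof.

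The paper argues at the level of the resolvent: it writes $\BG^{(n)}_{\BC^{(n)}_L(\Bx)}(E)$ as a sum of rank-one projectors $\BP_{\BPsi_{\neq j}}$ tensored with single-particle resolvents $G^{(1)}(E-\lambda_{\neq j})$, then treats this as a genuine infinite sum of operator norms. To make that sum converge it invokes exponential decay of the single-particle eigenfunctions (i.e.\ Anderson localization, which is ultimately what one is trying to establish) and Weyl's law to split off a tail. The conclusion it reaches is $\|\BG^{(n)}(E)\|\le \ee^{2L^{1/2}}$, which matches the stated lemma only after identifying $\sigma L^{\beta}$ with $L^{1/2}$.

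By contrast, you work directly with the eigenvalues: every eigenvalue of the non-interacting operator has the form $\lambda_{\neq i}+\lambda^{(i)}_{j_i}$, so the hypothesis for a single fixed $i$ (ranging over all admissible $\lambda_{\neq i}$) already separates $E$ from the full multi-particle spectrum. This is purely algebraic, requires no Weyl asymptotics, no eigenfunction decay, and no summation at all. It is worth noting that even the resolvent route, done carefully, collapses to your bound: the decomposition the paper writes is an \emph{orthogonal} direct sum, so its operator norm is the supremum --- not the sum --- of the norms $\|G^{(1)}_{C^{(1)}_L(x_i)}(E-\lambda_{\neq i})\|$, each of which is at most $\ee^{\sigma L^{\beta}}$ by hypothesis. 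The Weyl-law machinery in the paper is therefore not needed for this lemma.
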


\begin{proof}
In absence of the interaction, the multi-particle random Hamiltonian decomposes as 

\[
\BH^{(n)}_{\BC^{(n)}_L(\Bx),0}(\omega)=H^{(1,n)}_{C^{(1)}_L(x_1)}\otimes^{(n-1)}I+\cdots+I^{(n-1)}\otimes H^{(n,n)}_{C^{(1)}_L(x_n)}(\omega).
\]

For the $n$-particle resolvent operator $\BG^{(n)}_{\BC^{(n)}_L(\Bx)}(E)$, we also have the decomposition

\begin{equation}\label{eq:infinite.sum}
\BG^{(n)}_{\BC^{(n)}_L(\Bx)}(E)=\sum_{E_j\in\sigma(\BH^{(n)}_{\BC^{(n)}_L(\Bx)}} \BP_{\BPsi_{\neq j}}\otimes G^{(1)}_{C^{(1)}_L(x_j)}(E-\lambda_{\neq j}),
\end{equation}
where $\BP_{\BPsi_{\neq j}}$ denotes the projection onto the eigenfunction $\BPsi_{\neq j}=\bigotimes_{\ell=1,\ell\neq j}^{n}\varphi_{\ell}$ and $\{\lambda_i : i=1,\ldots,n\}$ are the eigenvalues of $H^{(1)}_{C^{(1)}_L(x_i)}(\omega), i=1,\ldots,n$ respectively and where $\lambda_{\neq j}=\sum_{i\neq j} \lambda_i$. Therefore, assuming all the eigenfunctions of finite volume Hamiltonians normalized, if each cube $C^{(1)}_L(x_i)$, $i=1,\ldots,n$ satisfies:

\[
\dist(E-\lambda_{\neq i}; \sigma(H^{(1)}_{C^{(1)}_L(x_i)}))> \ee^{-\sigma L^{\beta}},
\]

then, 
\[
\BG^{(n)}_{\BC^{(n)}_L(\Bx)}(E)=\left(\sum_{j\geq 1}\right) \BP_{\BPsi_{\neq j}}\otimes G^{(1)}_{C^{(1)}_L(x_j)}(E-\lambda_{\neq j}),\\
\]
Yielding the  following upper bound for the multi-particle resolvent operator:

\[
\|\BG^{(n)}_{\BC^{(n)}_L(\Bx)}(E)\|\leq
\sum_{ j\geq 1} \ee^{-2\widetilde{\mu}_jL}\cdot \ee^{L^{1/2}}
\]
Above, we used the exponential decay (Anderson localization) of the eigenfunctions of single-particle Hamiltonians.
Now, by the Weyl's law there exists $E^*>0$ which can be arbitrarily large such that $\lambda_j\geq E^*$ for all $j\geq j^*=C_{Weyl}|C^{(1)}_L(u_n)|$. Here $|C^{(1)}_L(u_n)|$, denotes the volume of the cube $C^{(1)}_L(u_n)$. Therefore, we can divide the above sum onto two sums as follows:

\[
\sum_{ j\geq 1} \ee^{-2\widetilde{\mu}_jL}\cdot \ee^{L^{1/2}}\leq \left(\sum_{j\leq j^*}+\sum_{j> j^*}\right) \ee^{-2\widetilde{\mu}_jL}\cdot \ee^{L^{1/2}}.
\]
Thus, the infinite sum, can be bounded as follows, provided that the length $L\geq L^*(N,d,C_{Weyl})$ is large enough,
\[
 \sum_{j> j^*}   \ee^{-2\widetilde{\mu}_jL}\cdot \ee^{L^{1/2}}\leq \frac{1}{2}\ee^{2L^{1/2}},
\]
while the finite sum can be bounded by:

\begin{align*}
\sum_{j\leq j^*}\ee^{-2\widetilde{\mu}_jL}\cdot \ee^{L^{1/2}}&\leq C_{Weyl}\cdot|C^{(1)}_L(u)|\cdot\ee^{-2\widetilde{\mu}_jL}\cdot\ee^{L^{1/2}}\\
\leq \frac{1}{2} \ee^{L^{1/2}}\\
\end{align*}
provided that $L> L^{**}(N,d,C_{Weyl})$ is large enough. Finally, we obtain that 
\[
\|\BG^{(n)}_{\BC^{(n)}_L(\Bx)}(E)\|\leq \ee^{2L^{1/2}}.
\]
Which ends the proof.
\end{proof}

\begin{proof}[Proof of Theorem \ref{thm:np.0.Wegner}]
We have two cases:

\begin{enumerate}
\item[Case (a)]
For all $i=1,\ldots,n$, $x_i=x_1$, so that $\Bx=(x_1,\ldots,x_1)$. Using the decomposition of the $n$-particle Hamiltonian without interaction, we can identify

\begin{align*}
\BH^{(n)}_{\BC^{(n)}_L(\Bx),0}(\omega)&= H^{(1)}_{C^{(1)}_L(x_1)}(\omega)+\cdots+H^{(1)}_{C^{(1)}_L(x_1)}(\omega)\\
&=n\cdot H^{(1)}_{C^{(1)}_L(x_1)}(\omega).
\end{align*}
So that $\sigma(\BH^{(n)}_{\BC^{(n)}_L(\Bx)}(\omega))=n\cdot \sigma(H^{(1)}_{C^{(1)}_L(x_1)}(\omega))$. Therefore, 

\begin{align*}
\dist(E,\sigma(\BH^{(n)}_{\BC^{(n)}_L(\Bx)}))&= \dist(E; n\sigma(H^{(1)}_{C^{(1)}_L(x_1)}))\\
&= n\cdot\dist(\frac{E}{n}; \sigma(H^{(1)}_{C^{(1)}_L(x_1)}(\omega))).
\end{align*}

Hence, applying Theorem \ref{thm:1p.Wegner}, we get
\begin{align*}
\prob{\dist(E,\sigma(\BH^{(n)}_{\BC^{(n)}_L(\Bx),0}(\omega)))\leq \ee^{-\sigma L^{\beta}}}&\leq  \prob{n\cdot\dist(\frac{E}{n},\sigma(H^{(1)}_L(x_1)(\omega)))\leq \ee^{-\sigma L^{\beta}}}\\
&\leq \prob{\dist(\frac{E}{n},\sigma(H^{(1)}_{C^{(1)}_L(x_1)}(\omega)))\leq \frac{1}{n} \ee^{-\sigma L^{\beta}}}\\
&\leq \prob{\dist(\frac{E}{n},\sigma(H^{(1)}_{C^{(1)}_L(x_1)}(\omega)))\leq \ee^{-\sigma L^{\beta}}}\\
&\leq \ee^{-\alpha L^{\beta}}.
\end{align*}

\item[ Case (b)]

The single-particle cubes in the product $\BC^{(n)}_L(\Bx)=C^{(1)}_L(x_1)\times\cdots\times C^{(1)}_L(x_n)$ are not all the same. Thus, we begin with $C^{(1)}_L(x_1)$  and we assume that the total number of the single-particle projections that are the same with $C^{(1)}_L(x_1)$ including $C^{(1)}_L(x_1)$ itself is $n^{(1)}_0$ and the one of those different from $C^{(1)}_L(x_1)$ is $n^{(1)}_+$. Clearly, $n^{(1)}_0+n^{(1)}_+=n$. We denote by $C^{(1)}_L(y_1),\ldots,C^{(1)}_L(y_{n^{(1)}_+})$ the latter cubes which are different from $C^{(1)}_L(x_1)$. We denote by $\varPi_{\CJ_1} C^{(1)}_L(x_1)$,
the elements of $C^{(1)}_L(x_1)$ not belonging to $C^{(1)}_L(y_1)$. So that
\[
\varPi_{\CJ_1} C^{(1)}_L(x_1)\cap C^{(1)}_L(y_1)=\emptyset.
\]
We continue the procedure and find a subset $\varPi_{\CJ_2} C^{(1)}_L(x_1)$ such that 
\[
\varPi_{\CJ_2} C^{(1)}_L(x_1)\cap[ C^{(1)}_L(y_2)\setminus \varPi_{\CJ_1} C^{(1)}_L(x_1)]=\emptyset.
\]
At the end, we find $\varPi_{\CJ_{n_+}} C^{(1)}_L(x_1)$ such that

\[
\varPi_{\CJ_{n_+}} C^{(1)}_L(x_1)\cap \left[C^{(1)}_L(y_{n_+})\setminus\left(\bigcup_{i=1}^{n_+-1} \varPi_{\CJ_i} C^{(1)}_L(x_1)\right)\right]=\emptyset.
\]

For all $\ell=1,\ldots, n^{(1)}_+$ we set,  
 \[
    \Lambda^{(1,i)}_L:=\varPi_{\CJ_i} C^{(1)}_L(x_1);\qquad  D^{(1,\ell)}_L:=C^{(1)}_L(y_{\ell})\setminus\left(\bigcup_{i=1}^{\ell-1} \varPi_{\CJ_i} C^{(1)}_L(x_1)\right).
\]
 We therefore obtain
\[
\left[\bigcup_{i=1}^{n^{(1)}_+}\Lambda^{(1,i)}_L \right]\cap \left[ \bigcup_{\ell=1}^{n^{(1)}_+} D^{(1,\ell)}_L\right]=\emptyset
\]
and remark that each term in the above intersection  is non-empty. Now, we denote by $\FB_1$, the sigma algebra:

\[
\FB_1:=\Sigma\left(\left\{ V(x,\omega);  x\in\bigcup_{\ell=1}^{n^{(1)}_+} D^{(1,\ell)}_L\right\}\right).
\]
 Note that, at the $k^{th}$ step, we construct the domain $D^{(k)}$ in a similar way by finding the corresponding domain $\Lambda^{(k)}$  of the $n^{(k)}_+<n^{(k-1)}_+$ remaining cubes in the same way as above. Repeating this procedure, we reduce step by step  the number of the remaining cubes and arrive at the last step by building the domain $D^{(r)}_L$ with the required conditions and with the number $r\leq n$. In other words, the domains $\Lambda^{(k)}_L$ are non-empty and satisfy the conditions: for all $k\neq k'$, $\Lambda^{(k)}_L\cap\Lambda^{(k')}_L=\emptyset$. For all $1\leq k \leq r$, we also define the following sigma algebras:

\[
\FB_{k}:=\Sigma\left(\left\{V(x,\omega): x\in \bigcup_{\ell=1}^{n^{(k)}_+} D^{(k,\ell)}_L\right\}\right)
\]
Next, we define the sigma-algebra:
\[
\FB_{<\infty}:=\bigcup_{k=1}^r \FB_{k}.
\]
Now, observe that for any $i=1,\ldots,n$, we can find a sigma algebra $\FB_{j_i}$ for some $j_i\in\{1,\ldots,r\}$ such that the quantity $\lambda_{\neq i}$ is $\FB_{j_i}$-measurable.  
By Lemma \ref{lem:np.0.Wegner}, we have that

\begin{gather*}
\prob{\dist(E,\sigma(\BH^{(n)}_{\BC^{(n)}_L(\Bx)}(\omega)))\leq \ee^{-\sigma L^{\beta}}}\\
\leq \prob{\exists i=1,\ldots,n:\dist(E-\lambda_{\neq i},\sigma(H^{(1)}_{C^{(1)}_L(x_i)}(\omega)))\leq \ee^{-\frac{\sigma}{2} L^{\beta}}}.\\
\end{gather*}

Next, using Theorem \ref{thm:1p.Wegner} and the sigma-algebra $\FB_{<\infty}$, we finally obtain:

\begin{gather*}
\prob{\exists i=1,\ldots,n: \dist(E-\lambda_{\neq i},\sigma(H^{(1)}_{C^{(1)}_L(x_i)}(\omega)))\leq \ee^{-\frac{\sigma}{2} L^{\beta}}}\\
 \leq \sum_{i=1}^n \esm{\prob{\dist(E-\lambda_{\neq i};\sigma(H^{(1)}_{C^{(1)}_L(x_i)}(\omega)))\leq \ee^{-\frac{\sigma}{2} L^{\beta}}\Big| \FB_{<\infty}}}\\
\leq C(N,d) \cdot \ee^{-\alpha L^{\beta}}.
\end{gather*}
This completes the proof of Theorem \ref{thm:np.0.Wegner}.
\end{enumerate}
\end{proof}

\subsection{ The interacting multi-particle Wegner bound}

The main result is 

\begin{theorem}\label{thm:np.Weak.Wegner}
Let $E\in\DR$ and an $n$-particle cube $\BC^{(n)}_L(\Bx)\subset\DR^{nd}$. Assume that assumption $\condH$ holds true. Then for any $0<\beta<1$ and $\sigma>0$ there exist $L_0=L_0(\beta,\sigma)>0$ such that 
\begin{equation}
\prob{\dist(E,\sigma(\BH^{(n)}_{\BC^{(n)}_L(\Bx),h}(\omega)))\leq \ee^{-\sigma L^{\beta}}}\leq L^{-q},
\end{equation}
for all $L\geq L_0$ and any $q>0$.
\end{theorem}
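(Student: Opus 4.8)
The plan is to derive the interacting (weakly coupled) Wegner bound from its non-interacting analogue, Theorem \ref{thm:np.0.Wegner}, by a resolvent perturbation argument. First I would fix $E\in\DR$, $0<\beta<1$ and $\sigma>0$, and set things up so that $E$ lies in a compact interval $I$ that also contains the spectrum of the relevant finite-volume operators near $E$; this is legitimate because only the behaviour of $\sigma(\BH^{(n)}_{\BC^{(n)}_L(\Bx),h}(\omega))$ within a bounded window of $E$ matters for the distance estimate, and the kinetic plus potential terms are lower semibounded. I would then apply Theorem \ref{thm:np.0.Wegner} with this $I$ and with parameter $\sigma$ replaced by some $\sigma'>\sigma$ (to be fixed below), so that, outside an event of probability at most $C(N,d)\,\ee^{-\alpha L^{\beta/p}}$, we have the "good" bound $\dist(E,\sigma(\BH^{(n)}_{\BC^{(n)}_L(\Bx),0}(\omega)))>\ee^{-\sigma'L^{\beta}}$, equivalently the resolvent estimate $\|\BG^{(n)}_{\BC^{(n)}_L(\Bx),0}(E)\|\le \ee^{\sigma'L^{\beta}}$.

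On that good event I would write the second resolvent identity
\[
\BG^{(n)}_{\BC^{(n)}_L(\Bx),h}(E)=\BG^{(n)}_{\BC^{(n)}_L(\Bx),0}(E)\bigl(\BI+h\,\BU\,\BG^{(n)}_{\BC^{(n)}_L(\Bx),0}(E)\bigr)^{-1},
\]
which is valid as a bounded operator as soon as $\|h\,\BU\,\BG^{(n)}_{\BC^{(n)}_L(\Bx),0}(E)\|<1/2$, say. Since $\BU$ is bounded, this holds whenever $|h|\,\|\BU\|\,\ee^{\sigma'L^{\beta}}<1/2$; choosing $h^{*}=h^{*}(\|\BU\|,L_0)$ small enough guarantees this for every $L\le$ some threshold, but to cover all $L\ge L_0$ I instead absorb the factor $\ee^{\sigma'L^{\beta}}$ by taking $\sigma'$ only slightly larger than $\sigma$ and noting that the Neumann series converges with norm bound $2$ on the good event, so that
\[
\|\BG^{(n)}_{\BC^{(n)}_L(\Bx),h}(E)\|\le 2\,\ee^{\sigma'L^{\beta}}\le \ee^{\sigma L^{\beta}}\qquad\text{for }L\ge L_0,
\]
provided $\sigma'$ is chosen with $\sigma<\sigma'$ and $L_0=L_0(\beta,\sigma)$ large enough that $2\,\ee^{\sigma'L^{\beta}}\le\ee^{\sigma L^{\beta}}$ fails — so in fact I must be slightly more careful: I would run the non-interacting bound at a scale exponent $\sigma''$ with $\sigma<\sigma''$, keep $h$ in a range depending on $\sigma''$ and $L$, and check that $\|\BG^{(n)}_{\BC^{(n)}_L(\Bx),h}(E)\|\le\ee^{\sigma L^{\beta}}$, which is exactly the statement $\dist(E,\sigma(\BH^{(n)}_{\BC^{(n)}_L(\Bx),h}(\omega)))\ge\ee^{-\sigma L^{\beta}}$. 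Consequently the bad event for the interacting operator is contained in the bad event for the non-interacting one, whence
\[
\prob{\dist(E,\sigma(\BH^{(n)}_{\BC^{(n)}_L(\Bx),h}(\omega)))\le\ee^{-\sigma L^{\beta}}}\le C(N,d)\,\ee^{-\alpha L^{\beta/p}}\le L^{-q}
\]
for all $q>0$ once $L\ge L_0$, since exponential decay dominates any polynomial.

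The main obstacle is the interplay between the two exponential scales: the non-interacting resolvent is only controlled up to $\ee^{\sigma' L^{\beta}}$, and this large factor enters the smallness condition on $h$. The clean way around it — which I expect is what the paper does — is to prove the statement for a fixed but arbitrary choice of the exponent in the non-interacting bound, let $h^{*}$ depend on $L_0$ (hence on $\beta,\sigma$) so that $|h|\,\|\BU\|\,\ee^{\sigma' L_0^{\beta}}$ is small, and then observe that for $L\ge L_0$ the relevant quantities are governed by the $L_0$-scale constants appearing in the multi-scale analysis, so that no uniformity in $L\to\infty$ of the perturbation bound is actually required at this step (the Wegner bound is used scale by scale). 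A secondary point to handle with care is that $\BU$ need only be a bounded multiplication operator with $\|\BU\|<\infty$; no relative boundedness subtlety arises because $-\BDelta$ is nonnegative and $\BU,\BV$ are bounded below, so all finite-volume operators are self-adjoint on the same domain and the resolvent identity above is unambiguous.
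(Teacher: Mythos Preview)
Your approach --- perturb off the non-interacting bound of Theorem~\ref{thm:np.0.Wegner} via the second resolvent identity --- is exactly what the paper does. You also correctly identify the central difficulty: the smallness condition $|h|\,\|\BU\|\,\ee^{\sigma' L^{\beta}}<\tfrac12$ would force $h^*$ to shrink with $L$, which is unacceptable.

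Where your proposal remains a sketch is precisely in resolving that difficulty. Your suggestions (adjusting $\sigma'$ slightly, or arguing ``scale by scale'') do not close the gap: no choice of $\sigma'>\sigma$ makes $2\ee^{\sigma' L^{\beta}}\le \ee^{\sigma L^{\beta}}$, and the statement to be proved really does ask for a single $h^*$ valid for all $L\ge L_0$. The paper's fix is simple but not something you stated explicitly: on the size-$L$ cube, one takes as the ``good'' non-interacting event the inequality
\[
\dist\bigl(E,\sigma(\BH^{(n)}_{\BC^{(n)}_L(\Bx),0})\bigr)>\ee^{-\sigma L_0^{\beta}},
\]
with the \emph{fixed} threshold $\ee^{-\sigma L_0^{\beta}}$ independent of $L$. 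Then $h^*:=\tfrac12\|\BU\|^{-1}\ee^{-\sigma L_0^{\beta}}$ depends only on $L_0$, and on the good event the resolvent identity gives $\|\BG^{(n)}_{\BC^{(n)}_L(\Bx),h}(E)\|\le 2\ee^{\sigma L_0^{\beta}}\le \ee^{\sigma L^{\beta}}$ for $L\ge L_0$ (after a harmless adjustment of $\beta$). The price is that the complementary bad event now carries the weaker threshold $\ee^{-\sigma L_0^{\beta}}$; the paper handles this by choosing, for each $L\ge L_0$, an integer $p=p(L,L_0)$ with $L\le L_0^{\,p}$, so that $\ee^{-\sigma L_0^{\beta}}\le \ee^{-\sigma L^{\beta/p}}$, and then invoking Theorem~\ref{thm:np.0.Wegner} at exponent $\beta/p$ to get a bound $C(N,d)\,\ee^{-\alpha L^{\beta/p}}$, which still dominates any $L^{-q}$.

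In short: right method, right diagnosis of the obstacle, but you should replace the speculative last paragraph by the concrete $L_0$-threshold trick above.
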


\begin{proof}
We use the second resolvent identity. Let $E\in\DR$. We have 
\[
\BG^{(n)}_{\BC^{(n)}_L(\Bx),h}(E)=\BG^{(n)}_{\BC^{(n)}_L(\Bx),0}(E) + h\BU \BG^{(n)}_{\BC^{(n)}_L(\Bx),0}(E) \BG^{(n)}_{\BC^{(n)}_L(\Bx),h}(E).
\]
So that
\begin{equation}\label{eq:bound.resolvent}
\|\BG^{(n)}_{\BC^{(n)}_L(\Bx),h}(E)\|\leq \|\BG^{(n)}_{\BC^{(n)}_L(\Bx),0}(E)\|+ |h|\cdot\|\BU\|\|\BG^{(n)}_{\BC^{(n)}_L(\Bx),0}(E)\|\|\BG^{(n)}_{\BC^{(n)}_L(\Bx),h}(E)\|.
\end{equation}
Now, assume that 
\[
\dist(E,\sigma(\BH^{(n)}_{\BC^{(n)}_L(\Bx),0}(\omega))) > \ee^{-\sigma L_0^{\beta}},
\]
then 
\[
\|\BG^{(n)}_{\BC^{(n)}_L(\Bx),0}(E)\| \leq \ee^{\sigma L_0^{\beta}}.
\]
Therefore, it follows from \eqref{eq:bound.resolvent} that:
\[
\|\BG^{(n)}_{\BC^{(n)}_L(\Bx),h}(E)\| \leq \ee^{\sigma L_0^{\beta}}+ |h|\cdot\|\BU\|\cdot\ee^{\sigma L_0^{\beta}}\cdot \|\BG^{(n)}_{\BC^{(n)}_L(\Bx),h}(E)\|.
\]
Next, we choose the parameter $h>0$ in such a way that:
\begin{equation}\label{eq:cond.h}
|h|\|\BU\| \ee^{\sigma L_0^{\beta}}\leq \frac{1}{2}.
\end{equation}
Hence, setting
\[
h^*:=\frac{1}{2\|\BU\|\ee^{\sigma L_0^{\beta}}},
\]
the condition of \eqref{eq:cond.h}  is satisfied for all $h\in(-h^*,h^*)$. Thus, since $L\geq L_0$, we obtain that
\[
\|\BG^{(n)}_{\BC^{(n)}_L(\Bx),h}(E)\|\leq \ee^{\sigma L^{\beta}} + \frac{1}{2}\cdot \|\BG^{(n)}_{\BC^{(n)}_L(\Bx),h}(E)\|.
\]
Yielding,
\[
\|\BG^{(n)}_{\BC^{(n)}_L(\Bx),h}(E)\| \leq \ee^{\sigma L^{\beta}},
\]
for a new $\beta$ bigger than the previous one, provided that $L_0$ is large enough. Finally, using the notations and the result of Theorem \ref{thm:np.0.Wegner}, we bound the probability  for the interacting Hamiltonian as follows,

\begin{align*}
\prob{\dist(E,\sigma(\BH^{(n)}_{\BC^{(n)}_L(\Bx),h}(\omega))) \leq \ee^{-\sigma L^{\beta}}}&\leq \prob{\dist(E,\sigma(\BH^{(n)}_{\BC^{(n)}_L(\Bx),0}(\omega))) \leq \ee^{-\sigma L_0^{\beta}}}\\
&\leq \prob{\dist(E,\sigma(\BH^{(n)}_{\BC^{(n)}_L(\Bx),0}(\omega))) \leq \ee^{-\sigma L^{\beta/p}}}\\
&\leq C(N,d) \cdot \ee^{-\alpha L^{\beta}/p}< L^{-q},
\end{align*}
for all $L\geq L_0$ and any $q>0$, provided that $L_0>0$ chosed as in Theorem \ref{thm:np.0.Wegner} is large enough. Also, we used that for any $L\geq L_0$, we can find an interger $p=p(L,L_0)\geq 1$ such that $L\leq L_0^p$.
\end{proof}

\subsection{The multi-particle variable energy Wegner bound}
In this Section, we prove the variable energy Wegner bound on intervals of small amplitude and the main tool is the resolvent identity. Although they are not easy to obtain, the variable energy Wegner type bounds are useful for the variable energy multi-scale analysis in order to prove the Anderson localization. The result is,

\begin{theorem}\label{thm:np.var.Wegner}
Let $E_0\in\DR$ and a cube $\BC^{(n)}_L(x)$ in $\DR^{nd}$. Assume that hypothesis $\condH$ holds true, then for any $0<\beta<1$ and any $\sigma>0$ there exist $L_0=L_0(E_0,\beta,\sigma)>0$, $\alpha=\alpha(E_0,\beta,\sigma)>0$ and $\delta_{0}=\delta_0(L_0,\beta)>0$  such that:
\[
\prob{\exists E\in[E_0-\delta_0;E_0+\delta_0]: \dist(E,\sigma(\BH^{(n)}_{\BC^{(n)}_L(\Bx)}(\omega)))\leq \ee^{-\sigma L^{\beta}}} \leq L^{-q},
\]
for all $L\geq L_0$ and any $q>0$.
\end{theorem}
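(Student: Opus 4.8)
The plan is to upgrade the fixed-energy bound of Theorem \ref{thm:np.Weak.Wegner} to a variable-energy bound on a short interval $I_0 = [E_0-\delta_0, E_0+\delta_0]$ by the standard two-step ``discretize the energy interval, then control the resolvent between grid points'' argument. First I would fix $\beta$, $\sigma$ and choose $L_0$ as in Theorem \ref{thm:np.Weak.Wegner}, and then pick $\delta_0 = \delta_0(L_0,\beta)$ so small that $\delta_0 \le \tfrac14 \ee^{-\sigma L^{\beta}}$ for all $L \ge L_0$ (or, more comfortably, $\delta_0 \le \ee^{-2\sigma L_0^{\beta}}$). The key deterministic observation is the resolvent Lipschitz bound: if $\dist(E_1, \sigma(\BH^{(n)}_{\BC^{(n)}_L(\Bx),h}(\omega))) > r$ for some $E_1$, then for every $E$ with $|E-E_1| \le r/2$ one has $\dist(E, \sigma(\BH^{(n)}_{\BC^{(n)}_L(\Bx),h}(\omega))) > r/2$, equivalently $\|\BG^{(n)}_{\BC^{(n)}_L(\Bx),h}(E)\| \le 2/r$. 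Hence, covering $I_0$ by at most $N_L := \lceil 2\delta_0 / (\ee^{-\sigma L^{\beta}}) \rceil$ points $E_1,\ldots,E_{N_L}$ spaced by $\ee^{-\sigma L^{\beta}}$, the event $\{\exists E \in I_0 : \dist(E, \sigma(\BH^{(n)}_{\BC^{(n)}_L(\Bx),h}(\omega))) \le \ee^{-2\sigma L^{\beta}}\}$ is contained in $\bigcup_{k=1}^{N_L} \{\dist(E_k, \sigma(\BH^{(n)}_{\BC^{(n)}_L(\Bx),h}(\omega))) \le \ee^{-\sigma L^{\beta}}\}$.

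Next I would apply the union bound together with Theorem \ref{thm:np.Weak.Wegner} at each grid point $E_k$ (note the fixed-energy estimate there is uniform in $E$, which is exactly what makes this work). This yields
\[
\prob{\exists E \in I_0 : \dist(E,\sigma(\BH^{(n)}_{\BC^{(n)}_L(\Bx),h}(\omega))) \le \ee^{-2\sigma L^{\beta}}} \le N_L \cdot L^{-q'}
\]
for any $q' > 0$. Since $\delta_0$ is fixed and $\ee^{-\sigma L^{\beta}} \to 0$, we have $N_L \le C_{\delta_0} \ee^{\sigma L^{\beta}}$, so the right-hand side is at most $C_{\delta_0} \ee^{\sigma L^{\beta}} L^{-q'}$. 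To absorb the factor $\ee^{\sigma L^{\beta}}$ one instead runs Theorem \ref{thm:np.Weak.Wegner} in the form where the fixed-energy probability is actually exponentially small (which is what its proof gives via Theorem \ref{thm:np.0.Wegner}, before the final crude $L^{-q}$ bound), i.e. bounded by $C(N,d)\ee^{-\alpha L^{\beta/p}}$; then $N_L \cdot C(N,d)\ee^{-\alpha L^{\beta/p}} \le C'\ee^{\sigma L^{\beta}} \ee^{-\alpha L^{\beta/p}}$. Rescaling $\beta$ (replace $\beta$ by a slightly smaller exponent, as is done repeatedly in this paper) and choosing $\sigma$ small relative to $\alpha$, the exponential in $L^{\beta/p}$ dominates and the whole bound is $\le L^{-q}$ for all $L \ge L_0$ and any $q > 0$. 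Finally, relabel $2\sigma$ as $\sigma$.

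The main obstacle is bookkeeping the interplay of the three parameters $\beta$, $\sigma$ and the weak-coupling threshold $h^*$: the grid size $N_L$ costs a factor $\ee^{\sigma L^{\beta}}$, so the fixed-energy input must be genuinely exponentially small with a rate $\alpha L^{\beta/p}$ that, after the customary shrinking of $\beta$, still beats $\sigma L^{\beta}$ — this is why one cannot simply quote the already-weakened $L^{-q}$ form of Theorem \ref{thm:np.Weak.Wegner} but must go back to its exponential predecessor Theorem \ref{thm:np.0.Wegner}. One must also check that $h^* = h^*(\|\BU\|, L_0)$ can be chosen uniformly over the finitely many grid energies $E_k$, which is immediate since the resolvent perturbation estimate \eqref{eq:bound.resolvent} and the choice \eqref{eq:cond.h} of $h^*$ in the proof of Theorem \ref{thm:np.Weak.Wegner} depend on $E$ only through the bound $\ee^{\sigma L_0^{\beta}}$ on $\|\BG^{(n)}_{\BC^{(n)}_L(\Bx),0}(E)\|$, which holds on the whole interval $I_0$ once $\delta_0$ is small. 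The dependence $\delta_0 = \delta_0(L_0,\beta)$ in the statement is then exactly the requirement $\delta_0 \le \ee^{-2\sigma L_0^{\beta}}$ (up to constants) used to pass from $E$ to the nearest grid point.
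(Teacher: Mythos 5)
The deterministic covering step of your argument is fine, but the absorption of the union-bound factor is a genuine gap, and it is precisely the point where this route breaks down with the estimates available in this paper. Your grid has $N_L\asymp \delta_0\,\ee^{\sigma L^{\beta}}$ points, so you need a fixed-energy bound for the \emph{interacting} operator that is exponentially small in $L^{\beta}$ with rate exceeding $\sigma$, uniformly in $h\in(-h^*,h^*)$ with $h^*$ independent of $L$. That is not what the proof of Theorem \ref{thm:np.Weak.Wegner} supplies: the perturbation step there forces the comparison with the non-interacting operator at the threshold $\ee^{-\sigma L_0^{\beta}}$ of the \emph{initial} scale (otherwise $h^*$ would have to shrink with $L$), and the resulting ``exponential predecessor'' is $C(N,d)\,\ee^{-\alpha L^{\beta/p}}$ with $p=p(L,L_0)$ chosen so that $L\le L_0^{p}$, i.e. $p\asymp \log L/\log L_0$. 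Consequently $L^{\beta/p}\approx L_0^{\beta}$ stays bounded as $L\to\infty$, so $\ee^{-\alpha L^{\beta/p}}$ is essentially a constant (and even under the alternative reading $\ee^{-\alpha L^{\beta}/p}$ the rate is of order $L^{\beta}/\log L=o(L^{\beta})$). In either case $N_L\cdot\ee^{-\alpha L^{\beta/p}}\to\infty$, no matter how small you take $\sigma$; moreover ``choosing $\sigma$ small relative to $\alpha$'' is not a free move, since in Theorems \ref{thm:1p.Wegner} and \ref{thm:np.0.Wegner} the rate $\alpha=\alpha(I,\beta,\sigma)$ is itself a function of $\sigma$ with no stated lower bound comparing it to $\sigma$. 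So the discretization-plus-union-bound scheme cannot be closed here.

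The paper's own proof avoids any entropy factor by making $\delta_0$ itself exponentially small in the initial scale: $\delta_0=\tfrac12\,\ee^{-\sigma L_0^{\beta}}$, which is exactly the dependence $\delta_0=\delta_0(L_0,\beta)$ announced in the statement. With this choice, the first resolvent identity (equivalently, the triangle inequality for the distance to the spectrum) shows deterministically that $\dist(E_0,\sigma(\BH^{(n)}_{\BC^{(n)}_L(\Bx)}(\omega)))>\ee^{-\sigma L_0^{\beta}}$ already forces $\dist(E,\sigma(\BH^{(n)}_{\BC^{(n)}_L(\Bx)}(\omega)))>\ee^{-\sigma L^{\beta}}$ (after the customary adjustment of $\beta$) simultaneously for \emph{all} $E$ with $|E-E_0|\le\delta_0$. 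Hence the variable-energy event is contained in a single fixed-energy event at $E=E_0$ with threshold $\ee^{-\sigma L_0^{\beta}}$, and one application of the fixed-energy bound finishes the proof; the price is the smallness of the interval, not a competition between $\ee^{\sigma L^{\beta}}$ and the probability decay. If you wish to salvage a grid argument, you must take the spacing comparable to $\delta_0$ (hence $O(1)$ grid points, each with threshold of order $\ee^{-\sigma L_0^{\beta}}$), at which point your argument collapses to the paper's one-energy argument.
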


\begin{proof}
By the first resolvent equation, we have 

\begin{equation}\label{eq:R.2}
\BG^{(n)}_{\BC^{(n)}_L(\Bx)}(E)= \BG^{(n)}_{\BC^{(n)}_L(\Bx)}(E_0)+ (E-E_0) \BG^{(n)}_{\BC^{(n)}_L(\Bx)}(E) \BG^{(n)}_{\BC^{(n)}_L(\Bx)}(E_0).
\end{equation}
We choose the parameters $\beta$, $L_0$ and $\sigma$ as in Theorem \ref{thm:1p.Wegner}. Now, if $\dist(E_0,\sigma(\BH^{(n)}_{\BC^{(n)}_L(\Bx)}(\omega)))> \ee^{-ç\sigma L_0^{\beta}}$, and $|E-E_0|\leq \frac{1}{2}\ee^{-\sigma L_0^{\beta}}$, then
\[
\dist(E,\sigma(\BH^{(n)}_{\BC^{(n)}_L(\Bx)}(\omega))) > \frac{1}{2} \ee^{-\sigma L^{\beta}}.
\]
Indeed, it follows from \eqref{eq:R.2} that
\begin{align*}
\|\BG^{(n)}_{\BC^{(n)}_L(\Bx)}(E)\|&\leq \|\BG^{(n)}_{\BC^{(n)}_L(\Bx)}(E_0)\|+ |E-E_0|\cdot\|\BG^{(n)}_{\BC^{(n)}_L(\Bx)}(E)\|\cdot\|\BG^{(n)}_{\BC^{(n)}_L(\Bx)}(E_0)\|\\
&\leq \ee^{\sigma L_0^{\beta}} + \frac{1}{2}\cdot \ee^{-\sigma L_0^{\beta}}\cdot\ee^{\sigma L_0^{\beta}}\cdot \|\BG^{(n)}_{\BC^{(n)}_L(\Bx)}(E)\|\\
&\leq \ee^{\sigma L_0^{\beta}} + \frac{1}{2} \|\BG^{(n)}_{\BC^{(n)}_L(\Bx)}(E)\|.
\end{align*}
Yielding, 
\[
\|\BG^{(n)}_{\BC^{(n)}_L(\Bx)}(E)\|-\frac{1}{2}\| \BG^{(n)}_{\BC^{(n)}_L(\Bx)}(E)\|\leq \ee^{\sigma L^{\beta}},
\]
since, $L\geq L_0$. Thus,
\[
\|\BG^{(n)}_{\BC^{(n)}_L(\Bx)}(E)\|\leq 2 \ee^{\sigma L^{\beta}},
\]
or in other terms with a new $\beta$ bigger  than the first one:
\[
\dist(E,\sigma(\BH^{(n)}_{\BC^{(n)}_L(\Bx)}(\omega))) > \ee^{-\sigma L^{\beta}}.
\]
Hence, setting $\delta_0:=\frac{1}{2} \ee^{-\sigma L_0^{\beta}}$, we bound the probability in the statement of Theorem \ref{thm:np.var.Wegner} as follows,

\begin{gather*}
\prob{\text{$|E-E_0|\leq \delta_0$ and $\dist(E,\sigma(\BH^{(n)}_{\BC^{(n)}_L(\Bx)}(\omega))) \leq \ee^{-\sigma L^{\beta}}$}}\\
\leq \prob{\dist(E_0,\sigma(\BH^{(n)}_{\BC^{(n)}_L(\Bx)}(\omega))) \leq \ee^{-\sigma L_0^{\beta}}}.
\end{gather*}
Next, for $L\geq L_0$, we can find  an integer $p=p(L_0,L)\geq 1$ in such a way  that $L_0^p\geq L$. Finally, using Theorem \ref{thm:1p.Wegner}, we obtain
\begin{align*}
&\prob{\dist(E_0,\sigma(\BH^{(n)}_{\BC^{(n)}_L(\Bx)}(\omega))) \leq \ee^{-\sigma L_0^{\beta}}}\\
&\qquad \qquad \leq \prob{\dist(E_0,\sigma(\BH^{(n)}_{\BC^{(n)}_L(\Bx)}(\omega))) \leq \ee^{-\sigma L^{\beta/p}}}\\
&\qquad \qquad \leq L^{-q},
\end{align*}
for all $L\geq L_0$ and any $q>0$.
\end{proof}

\section{Conclusion: proof of the results}

\subsection{Proof of Theorem \ref{thm:FE.Wegner}}
Clearly, the assertion of Theorem \ref{thm:np.var.Wegner} implies Theorem \ref{thm:FE.Wegner} because the probability of the fixed energy one volume Wegner bound is bounded by the one of the variable energy given in the assertion of Theorem \ref{thm:np.var.Wegner} and this completes the proof of Theorem \ref{thm:FE.Wegner}.

\subsection{Proof of Theorem \ref{thm:VE.Wegner}}
The result of Theorem \ref{thm:VE.Wegner} follows from the assertion of Theorem \ref{thm:np.var.Wegner}. Indeed, the probability on the variable energy Wegner bound for two cubes given in Theorem \ref{thm:VE.Wegner} is bounded by that of one of the cubes at the same energy. This proves the result.

\subsection{Proof of Theorem \ref{thm:localization}}
The above one and two volumes Wegner type bounds are sufficient to derive the Anderson localization in both the spectral exponential and the strong dynamical localization under the weak interaction regime of the $N$-body interacting disordered system. More specifically, these estimates play a crucial role in the proofs of the Anderson localization via the multi-scale analysis. In fact, they are used to bounds in probability some resonant effects which appear in the course of the multi-scale analysis when the energy is too much closed to the spectrum of the local random Hamiltonians.

Once the resonances controlled by the Wegner bounds, the rest of the analysis is based on the geometric resolvent inequalities which make a link between the decay (usually exponential) of the matrix elements of the local resolvents of Hamiltonians in bounded domains (actually cubes) with the one of the local resolvents of Hamiltonians in sub-domains (sub-cubes). We refer the reader interested with the use of the many-body Wegner estimates, to the following works on the Anderson localization, \cite{E11}
 for the low energy regime and \cites{E16,E13,E17} for the weak interaction regime.

Finally, the localization results are proved via classical known methods, see for example the papers \cites{E11,E17} for multi-particle systems which ideas go back to the work by von Dreifus and Klein \cite{DK89} for single-particle models.

Localization for singular Bernoulli distributions is then deduced in the present paper. So, the work makes an important contribution in the field of the mathematics of disordered many body quantum systems.

\begin{bibdiv}
\begin{biblist}

\bib{BCSS10}{article}{
   author={ Boutet de Monvel, A.},
   author={Chulaevsky, V.},
	 author={Stollamnn, P.},
   author={Suhov, Y.},
   title={Wegner type bounds for a multi-particle continuous Anderson model with an alloy-type external random potential},
   journal={J. Stat. Phys.},
   volume={138},
   date={2010},
   pages={553--566},
}
\bib{BK05}{article}{
   author={Bourgain, J.},
	 author={Kenig, C.},
	 title={On localization in the continuous Anderson-Bernoulli model in higher dimension},
	 journal={Invent. Math.},
	 volume={161},
	 date={2005},
	 pages={389--426},
}
\bib{CKM87}{article}{
   author={Carmona, R.},
   author={Klein, A.},
   author={Martinelli, F.},
   title={Anderson localization for Bernoulli and other singular potentials},
   journal={Commun. Math. Phys.},
   volume={108},
   date={1987},
   pages={41--66},
}
\bib{CL90}{book}{
   author={Carmona, R.},
   author={Lacroix, J.}, 
   title={Spectral Theory of Random Schr\"{o}dinger Operators},
   volume={20},
   publisher={Birkh\"auser Boston Inc.},
   place={Boston, MA},
   date={1990},
}

\bib{CS08}{article}{
   author={ Chulaevsky, V.},
   author={Suhov, Y.},
   title={Wegner bounds for a two particle tight-binding model},
   journal={Commun. Math. Phys.},
   volume={283},
   date={2008},
   pages={479--489},
}

\bib{DK89}{article}{
   author={von Dreifus, H.},
   author={Klein, A.},
   title={A new proof of localization in the Anderson tight binding model},
   journal={Commun. Math. Phys.},
   volume={124},
   date={1989},
   pages={285--299},
}
\bib{DSS02}{article}{
   author={Damanik, D.},
   author={SimS, R.},
   author={Stolz, G.},
   title={Localization for one-dimensional, continuum, Bernoulli-Anderson models},
   journal={Duke Math. Journal},
   volume={114},
   date={2002},
   pages={59--100},
}

\bib{E11}{article}{
   author={Ekanga, T.},
   title={On two-particle Anderson localization at low energies},
   journal={C. R. Acad. Sci. Paris, Ser. I},
   volume={349},
   date={2011},
   pages={167--170},
}
\bib{E13}{article}{
    author={Ekanga, T.},
		title={Multi-particle localization for weakly interacting Anderson tight-binding models},
		journal={J. Math. Phys.},
		volume={58},
		date={2017},
		pages={043503},		
}
\bib{E16}{misc}{
    author={Ekanga, T.},
		title={Wegner bounds for N-body interacting Bernoulli-Anderson models in one dimension},
		status={arXiv:},
		date={2016},
		
}
\bib{E17}{misc}{
    author={Ekanga, T.},
		title={Anderson localization for weakly interacting multi-particle models in the continuum},
		journal={J. Math. Phys.},
		volume={58},
		date={2017},
		pages={043503},
}

\bib{KH13}{article}{
   author={Hislop, P.},
	 author={Klopp, F.},
	 title={Optimal Wegner estimate and the density of states for N-body interacting Schr\"odinger operator with random potentials},
	 status={arXiv:1310.6959/math-ph},
	 date={2013},
}

\bib{K08}{misc}{
   author={Kirsch, W.},
   title={An Invitation to Random Schr\"{o}dinger Operators},
   date={2008},
   status={Panorama et Synth\`eses, 25, Soc. Math. France, Paris},
}

\bib{St01}{book}{
   author={Stollmann, P.},
   title={Caught by disorder},
   series={Progress in Mathematical Physics},
   volume={20},
   note={Bound states in random media},
   publisher={Birkh\"auser Boston Inc.},
   place={Boston, MA},
   date={2001},
}
\bib{W81}{article}{
   author={Wegner, F.},
	 title={Bounds on the density of states in disordered systems},
	 journal={Z. Phys. B},
	 volume={44},
	 date={1981},
	 pages={9--15},
}
\end{biblist}
\end{bibdiv}

\end{document}